\def\BState{\State\hskip-\ALG@thistlm}
\def\thm@space@setup{\thm@preskip=2pt
        \thm@postskip=2pt \itshape}
\newtheoremstyle{newstyle}
{} %Aboveskip
{} %Below skip
{\mdseries} %Body font e.g.\mdseries,\bfseries,\scshape,\itshape
{} %Indent
{\bfseries} %Head font e.g.\bfseries,\scshape,\itshape
{.} %Punctuation afer theorem header
{ } %Space after theorem header
{} %Heading
\theoremstyle{newstyle}
\newtheorem*{theorem*}{Theorem}
\newtheorem{lemma}{Lemma}
\theoremstyle{definition}
\newtheorem*{example*}{Example}
\theoremstyle{remark}
\newtheorem*{claim*}{Claim}
\newtheorem{remark}{Remark}
\newcommand{\wtilde}{\widetilde}
\newcommand{\Expc}{\mathbb{E}}
\newcommand{\Prob}{\mathbb{P}}
\newcommand{\norm}[1]{\lVert#1\rVert}
\newcommand{\ssn}{\wtilde{\mathcal{N}}^i}
\def\widebreve{\mathpalette\wide@breve}
\def\wide@breve#1#2{\sbox\z@{$#1#2$}%
     \mathop{\vbox{\m@th\ialign{##\crcr
\kern0.08em\brevefill#1{0.8\wd\z@}\crcr\noalign{\nointerlineskip}%
                    $\hss#1#2\hss$\crcr}}}\limits}
\def\brevefill#1#2{$\m@th\sbox\tw@{$#1($}%
  \hss\resizebox{#2}{\wd\tw@}{\rotatebox[origin=c]{90}{\upshape(}}\hss$}
\NewDocumentCommand{\grad}{e{_^}}{%
  \mathop{}\!% \mathop for good spacing before \nabla
  \nabla
  \IfValueT{#1}{_{\mspace{-4mu}#1}}% tuck in the subscript
  \IfValueT{#2}{^{#2}}% possible superscript
}
\DeclareMathOperator*{\argmin}{arg\,min}
\begin{document}
 \sloppy

               \setlength{\belowcaptionskip}{-6pt}
        \setlength{\abovedisplayskip}{1mm}
        \setlength{\belowdisplayskip}{1mm}
        \setlength{\abovecaptionskip}{1mm}

        \title{Secure and Fault Tolerant Decentralized Learning} 
        
        \author{Saurav Prakash, Hanieh Hashemi, Yongqin Wang, Murali Annavaram, Salman Avestimehr
         \thanks{Saurav Prakash is with the Coordinated Science Laboratory, University of Illinois Urbana-Champaign, Urbana, IL 61801, USA (e-mail: sauravp2@uiuc.edu). Hanieh Hashemi, Yongqin Wang, Murali Annavaram, and Salman Avestimehr are with the Electrical Engineering Department, University of Southern California, Los Angeles, CA 90089, USA (e-mail: hashemis@usc.edu, yongqin@usc.edu, annavara@usc.edu, avestimehr@ee.usc.edu).}
    }
    
%     % The paper headers
% \markboth{Journal of \LaTeX\ Class Files,~Vol.~14, No.~8, August~2021}%
% {Shell \MakeLowercase{\textit{et al.}}: A Sample Article Using IEEEtran.cls for IEEE Journals}

% \IEEEpubid{0000--0000/00\$00.00~\copyright~2021 IEEE}

\maketitle

\begin{abstract}
Federated learning (FL) has emerged as a promising paradigm for training a global model over the data distributed across multiple data owners without centralizing clients’ raw data. 
However, sharing of its local model update with the FL server during training can also reveal information of a client's local dataset. 
As a result, trusted execution environments (TEEs) within the FL server have been recently deployed in production lines of companies like Meta for \textit{secure aggregation} of local updates. 
However, secure aggregation can suffer from performance degradation due to error-prone local updates sent by clients that become faulty during training due to underlying software and hardware failures. 
Moreover, data heterogeneity across clients makes fault mitigation quite challenging, as even the updates from the normal clients are quite dissimilar. 
Thus, most of the prior fault tolerant methods, which treat any local update differing from the majority of other updates as faulty, perform poorly. 
We propose \textit{DiverseFL} that makes model aggregation secure as well as robust to faults when data is heterogeneous. 
In DiverseFL, any client whose local model update diverges from its \textit{associated guiding update} is tagged as being faulty. 
To implement our novel \textit{per-client criteria} for fault mitigation, DiverseFL creates a TEE-based secure enclave within the FL server, which in addition to performing secure aggregation for carrying out the global model update step, securely receives a \textit{small representative sample} of local data from each client only once before training, and computes guiding updates for each participating client during training. 
Thus, DiverseFL provides security against privacy leakage as well as robustness against faulty clients. 
In experiments, DiverseFL consistently achieves significant improvements (up to $\sim 39\%$) in absolute test accuracy over prior fault mitigation benchmarks. 
DiverseFL also performs closely to \textit{OracleSGD}, where server combines updates only from the normal clients. 
We also analyze the convergence rate of DiverseFL under non-IID data and standard convexity assumptions.
\end{abstract}

\begin{IEEEkeywords}
decentralized machine learning, trusted execution environment, Intel SGX, federated learning, secure aggregation, fault tolerance.
\end{IEEEkeywords}
\IEEEpeerreviewmaketitle

\section{Introduction}
\label{sec:introduction}
Massive amounts of data is being generated each day by the growing ecosystem of billions of computing devices with sensors connected through the network edge and powered by artificial intelligence (AI). This is shaping the future of both public-interest and curiosity-driven scientific discovery as it has the potential to power a wide range of statistical machine learning (ML) based applications. While ML applications can achieve significant performance gains due to large volumes of user data \cite{devlin2018bert,dosovitskiy2020image}, the training data is distributed across multiple data owners in many scenarios and sharing of user data is limited due to privacy concerns and laws \cite{li2020federated,rieke2020future}.

For enabling decentralized machine learning from user data while preserving data privacy, federated learning (FL) has arose to be a promising approach~\cite{konevcny2015federated,konevcny2016federated,mcmahan2017communication,li2020federated,kairouz2021advances,rieke2020future}. A generic FL algorithm consists of two main steps -- the local SGD updates at each participating client using local data, and, the global model update at the central server using the local updates from the participating clients. These steps are carried in tandem iteratively until convergence. Sharing of raw local updates, however, can leak significant private information of the data at the clients \cite{geiping2020inverting,zhu2019deep}.   

To overcome privacy leakage from local model updates, \cite{bonawitz2017practical} proposed a cryptographic obfuscation algorithm. The principle theme is that in each FL round, each client masks its model update using random pairwise secret masks, such that appropriate cancellations of masks happen at the server when it combines the local updates during aggregation. This protocol, however, is susceptible to clients dropping out during training. As a result, the FL server has to reconstruct secret shares corresponding to the dropped clients and use them to cancel their corresponding masks which are present in the masked updates of the remaining clients, before being able to carry out the final aggregation and model update. This overhead in every FL round, in addition to the overhead in the generation of pairwise masks in the beginning of {each} FL round, slows down the FL training. \iffalse Recently, \cite{so2021turbo} proposed a circular aggregation algorithm for reducing the aforementioned overheads in secure aggregation. However, \cite{so2021turbo} still fundamentally requires pairwise  \fi  

Another critical bottleneck in aggregation of local updates at the server is that some of the clients can send faulty updates during training due to malfunctioning of devices, which can degrade the training performance tremendously \cite{haque2010hard,lamport2019byzantine}. Our focus is on the scenarios where the clients are \textit{honest-but-faulty}, i.e., the clients follow the training protocol honestly but can malfunction during training due to which the computed local update becomes faulty. The motivation for modeling FL clients as honest-but-faulty comes from the many use cases in the real-world where a set of honest clients wish to participate in privacy-preserving joint training to get a reliable ML model with good performance. For example, a recent paper \cite{sheller2020federated} in Nature Scientific Reports considers the critical scenario where a group of accredited hospitals want to develop an ML model, via FL, for cancer research by training over private medical records maintained at the hospitals. 
In these scenarios, hospitals would honestly follow training protocols due to signed agreements and good intentions, and any malicious behavior can seriously affect reputation in public. 
However, faults such as hardware/software errors can arise during training, as studied extensively in prior works such as~\cite{chen2017distributed,blanchard2017machine,yin2018byzantine,su2018securing,guerraoui2018hidden,xie2019zeno,pillutla2019robust,ozdayi2020defending, jin2020stochastic,pan2020justinian, el2020distributed,fung2018mitigating,li2019rsa,he2020byzantine,peng2020byzantine,cao2020fltrust}. For example, hospitals may have limited and unreliable computing capabilities, particularly in remote rural areas, making them quite likely to suffer from faults in training. 

A number of works focused exclusively on addressing faulty behaviors have been proposed for the IID data setting~\cite{chen2017distributed,blanchard2017machine,yin2018byzantine,su2018securing,guerraoui2018hidden,xie2019zeno,pillutla2019robust,ozdayi2020defending, jin2020stochastic,pan2020justinian, el2020distributed}. When data is IID, the model updates received from the normal clients tend to be distributed around the true model update. Hence, detecting faulty updates has been done through distance based schemes or other robust statistical estimation techniques. Data in typical FL settings, however, is non-IID across clients~\cite{zhao2018federated}, and thus achieving fault resiliency becomes even more challenging as the updates from even the normal clients are quite dissimilar. Furthermore, these methods are not compatible with the pairwise masking based secure aggregation in~\cite{bonawitz2017practical}, as they require the FL server to access unmasked client updates. 

Some recent works such as~\cite{fung2018mitigating,li2019rsa,he2020byzantine,peng2020byzantine,cao2020fltrust} deal with heterogeneous data distribution in FL. In \cite{he2020byzantine}, for example, a resampling approach is proposed to improve the performance of existing faulty robust schemes for the IID setting (e.g., Median \cite{yin2018byzantine}) by reducing the inner/outer variations coming from heterogeneity in stochastic updates within/across clients. In the parallel work of \cite{cao2020fltrust}, a trust bootstrapping method is proposed, in which the server collects a small clean training dataset, independently of client data before training. During training, it computes a model update on root dataset and uses it to assign similarity scores to client updates and carry out their weighted average. 

While above schemes typically perform better than their IID counterparts when data is non-IID, we demonstrate in our experiments in Section \ref{sec:experiments} that their performance is limited in comparison to \textit{OracleSGD}, where the omniscient FL server only aggregates the model updates from the clients that are not faulty. Furthermore, similar to their IID counterparts, these papers do not provide privacy protection of the local updates as the FL server has access to unmasked client updates. Prior works of \cite{so2020byzantine,elkordy2022heterosag} propose to address both secure aggregation and fault mitigation in FL. However, the proposed algorithms incur high secure aggregation complexity of $O(N^3)$ ($N$ being the total number of clients) due to pairwise masking (similar to the issue in \cite{bonawitz2017practical} described above), and are also unsuitable for non-IID data as they rely on similarity of non-faulty updates.

The key challenge in making the secure aggregation algorithm of~\cite{bonawitz2017practical} resilient to faulty client updates is that masked client updates are interdependent due to the pairwise additive masking and only the final aggregated update is accessible in the unmasked form. However, prior proposed algorithms for fault mitigation require access to the actual client updates. 

To overcome the dependency of a client's update on other participating clients for carrying out secure aggregation and speed up the secure aggregation process, Trusted Execution Environments (TEEs) \cite{costan2016intel,ARMREALM,alves2004trustzone} within the FL server have been recently proposed and deployed in production lines of companies including Meta~\cite{nguyen2022federated,mo2021ppfl,teesecureagg}. TEEs such as Intel SGX~\cite{costan2016intel} provide \textit{secure isolated execution environments} where data confidentiality and computational integrity of the individual client's application is guaranteed by the hardware and software encryption. Therefore, individual updates can be securely aggregated within the secure TEE enclave at the FL server without requiring pairwise additive masks as in \cite{bonawitz2017practical}, while simultaneously protecting the privacy of client datasets and model updates \cite{nguyen2022federated} from an honest-but-curious FL server. \iffalse While this also enables the implementation of prior fault mitigation strategies in association with secure aggregation, no prior work has evaluated the performance of prior fault mitigation and secure aggregation within the TEEs. \fi 

{\textbf{Our contributions}}: We propose \textit{DiverseFL}, a novel TEE-based solution involving a \textit{per-client} criteria to overcome the aforementioned challenges in \textit{providing fault resiliency to secure aggregation in FL with non-IID data while maintaining close to OracleSGD performance}. In the following, we highlight the main aspects of our proposal.
 
%\begin{enumerate}
%\item 
\begin{itemize}
\item \textit{Per-client fault mitigation}: Rather than rely on the similarity of updates across clients, our work takes the view that similarity between the expected and received updates from a client are better markers for detecting faulty behaviors. During the offline phase before training starts, the FL server asks each client to share with the secure TEE enclave a small, representative sample of its local data having the same proportion of labels as the client data. During training, for each client, a \textit{guiding} update over its TEE sample is computed within the secure enclave, and this is leveraged to estimate whether the corresponding client is faulty. Our main intuition is that the guiding update associated with a client is similar to the model update received from the client if it is normal, while an arbitrary update from a faulty client is quite different from its associated guiding update. Followed by the fault mitigation, the updates from the non-faulty clients are aggregated within the TEE enclave, thus ensuring fault resilient secure aggregation.

\iffalse\quad Based on these ideas, we propose our novel approach for fault mitigation during training that works on a per-client basis. In particular, a client is flagged by the server as faulty if either of our proposed similarity conditions is violated -- (i) the dot product between the client update and its associated guiding update is greater than a pre-defined positive constant, (ii) the ratio of the Euclidean norms of the two is within a pre-defined range. Followed by the fault mitigation, the updates from the non-flagged clients are aggregated within the TEE enclave, thus ensuring fault resilient secure aggregation.
\fi    
\begin{remark}
\label{rmk:1}
DiverseFL's fault mitigation leverages clean representative subsets of clients' local datasets shared securely by clients only once during the offline phase. As discussed previously, we focus on the practical scenarios where the clients are {honest-but-faulty}, i.e., the clients follow the protocols honestly but can have anomalies due to hardware/software faults during training. Hence, the clients provide clean data to the TEE enclave at the FL server. For example, the National Institutes of Health (NIH) can orchestrate FL for oncology research in which the participating accredited hospitals \cite{sheller2020federated} ensure they provide small clean subsets of their medical datasets. 
\end{remark}    
    
%\item 

\item \textit{Secure Enclave for Privacy Protection}: As mentioned previously, DiverseFL enables the clients to privately share their samples with a Trusted Execution Environment (TEE) based secure enclave on the FL server. The data as well as the applications inside the TEE are protected via software and hardware cryptographic mechanisms~\cite{costan2016intel,mckeen2013innovative,alves2004trustzone}. Because of their hardware guaranteed privacy/integrity features, TEEs have been widely used in recent years for privacy-preserving ML on centralized training with private data \cite{tramer2018slalom,hunt2018chiron,kockan2020sketching,mo2021ppfl,dokmai2021privacy}, for secure aggregation in FL (with no fault mitigation) \cite{nguyen2022federated}, and more recently, for sharing raw data among clients for performance improvement in decentralized serverless training \cite{dhasade2022tee} (without any fault mitigation). DiverseFL is the first algorithm that uses a TEE-enclave (based on Intel SGX~\cite{costan2016intel} in particular) on the FL server for providing secure aggregation as well as fault mitigation in FL setting with non-IID data across clients, where it is hard to know whether dissimilarity between client updates is due to faults or due to data heterogeneity. Particularly, the secure enclave  enables the federated clients to verify that their data samples are not leaked to any external party, even to the FL server administrator. Furthermore, computations of the guiding updates, implementation of the per-client criteria of DiverseFL to filter out the faulty updates, aggregation of the non-faulty updates and finally the model update, all steps are securely carried out within the secure enclave. Thus, DiverseFL provides secure aggregation and as we highlight next, it outperforms prior (non-secure) baselines for fault mitigation in non-IID FL by significant margins.
%\item

\item \textit{Experimental Results}: We evaluate neural network training performance with different benchmark datasets (see Section \ref{sec:experiments} for details). Our results exhibit the important aspects of DiverseFL as described next. First, we demonstrate that even when each client shares just $1{-}3\%$ of its local data with the secure TEE enclave, that sample size is sufficient for reliably computing guiding updates and for significantly improving model performance. Next, we experimentally demonstrate the scalability of implementing DiverseFL. For this, we measure the performance of the Intel SGX based TEE for computing guiding updates, and for comparing the performances of TEE and edge-devices to provide the number of clients each TEE can support without any slowdown. We show that a single TEE can support up to $316$ clients. Such analysis is quite useful in deciding number of TEEs that the server needs to use for setting up the secure enclave. We  also provide an additional discussion on further scaling the number of supported clients based on the recent works and improvements related to TEE capabilities. \iffalse This can be useful in   analyze scalability of TEE-assisted DiverseFL implementation.  provide the number of edge-devices each TEE can support without any slowdown in the FL implementation. \fi
%\item 

\item \textit{Convergence Analysis}: We provide a convergence analysis of DiverseFL for non-IID data distribution across clients in the presence of an \textit{arbitrary} number of faulty clients, under standard assumptions such as strong convexity of the local loss functions. For this, we first obtain the probabilistic bounds pertaining to the error between the current model and the optimal global model with respect to each client that satisfies the DiverseFL's per-client similarity criteria in a given round, and then use the intermediate results to prove convergence. The analysis has been provided in Appendix.
\end{itemize}
\section{Problem Setup}
\label{sec:background}
\label{sec:probSetup}
{
In this section, we first describe stochastic gradient descent (SGD) in the context of federated learning in the absence of faulty clients, then present the fault model for the participating clients, and finally present a concise background on the prior applications of Trusted Execution Environments (TEEs) for privacy-preserving machine learning in centralized as well as distributed settings.}
%  \vspace{-3mm}
\subsection{Federated Learning with SGD}
\label{sec:fedsgd}
We consider a federated learning (FL) setup with $N$ client nodes (workers) that are connected to a central FL server (master). For $j{\in}[N]{=}{\{1,\ldots,N\}}$, let ${D}_j$ denote the client $j$'s labelled dataset with $|D_j|=n$ being the number of points in $D_j$. Furthermore, each data point in $D_j$ is drawn from an unknown local data distribution $\mathcal{D}_j$. Since data is typically non-IID across clients in FL \cite{zhao2018federated}, $\mathcal{D}_i$ can be different from $\mathcal{D}_j$ for $i,j{\in} [N]$ and $i{\neq} j$. The primary aim in FL is to solve the following global optimization problem: 
\begin{align}
\label{eq:main_opt}
\theta^{*}&{=}\argmin_{\theta{\in} \Theta} \frac{1}{N}\sum_{j=1}^N F_j(\theta),
\end{align}
where, $F_j(\theta)=\Expc_{\zeta_j{\sim \mathcal{D}_j}}(l(\theta;\zeta_j))$ denotes the local expected loss at client $j{\in}[N]$, and $\Theta{\subseteq}\mathbb{R}^{d}$ denotes the model parameter space. Here, $l(\theta;\zeta_j){\in}\mathbb{R}$ denotes the predictive loss function (such as the cross-entropy loss) for model parameter $\theta{\in} \Theta$ on the data point $\zeta_j$. 

The solution to \eqref{eq:main_opt} is obtained by using an iterative training procedure that involves two key steps. For a general communication round $i{\in}\{1,{\ldots}, R\}$, the FL server selects a subset $\mathcal{S}^{(i)}$ of the clients, where $|\mathcal{S}^{(i)}|{=}C{\leq }N$. Each client $j\in\mathcal{S}^{(i)}$ receives the current model $\theta^{(i-1)}$ from the master, and locally updates it using stochastic gradient descent (SGD) with a learning rate of $\alpha^{(i)}$ for $E$ iterations. Specifically, for $\tau{\in}\{1,\ldots, E\}$, client $j$ carries out a stochastic gradient update $\theta_j^{(i,\tau)} {=} \theta_j^{(i,\tau-1)}{-} \alpha^{(i)} g_j^{(i,\tau)}$, where $\theta_j^{(i,0)}{=}\theta^{(i-1)}$, and $g_j^{(i,\tau)}{=}\grad_{\theta}l(\theta_j^{(i,\tau-1)};\mathcal{M}_j^{(i,\tau)})$. Here,  $\mathcal{M}_j^{(i,\tau)}$ is a mini-batch of size $m$ sampled uniformly at random from ${D}_j$, and $l(\theta_j^{(i,\tau-1)};\mathcal{M}_j^{(i,\tau)})$ denotes the empirical loss over $\mathcal{M}_j^{(i,\tau)}$. In the second step, the master receives the  model update $\Delta_j^{(i)}{=}\theta^{(i-1)}{-}\theta_j^{(i,E)}$ from each client $j{\in}\mathcal{S}^{(i)}$ and carries out the global model update as follows: $\theta^{(i)} = \theta^{(i-1)} - \Delta^{(i)}$, where $\Delta^{(i)}{=}\frac{1}{|\mathcal{S}^{(i)}|}\sum_{j\in\mathcal{S}^{(i)}} \Delta_j^{(i)}$. This combination of local and global training steps is repeated for $R$ communication rounds, where $R$ is a hyperparameter. 

While the participating clients are honest, they can exhibit faulty behaviors during training as described next.
\subsection{Fault Model}
\label{sec:byzantModel}
We assume that clients are \textit{honest-but-faulty}. Particularly, clients follow the protocols honestly and as highlighted in Remark \ref{rmk:1}, clients have clean datasets during the offline phase, i.e., before training begins. However, during training, clients can malfunction or may experience hardware/software errors, which are hard to detect due to the decentralized, distributed and multi-round implementation of FL. Formally stating, for round $i{\in}R$, let $\mathcal{F}^{(i)}{\subset}\mathcal{S}^{(i)}$ denote the set of faulty clients and $\mathcal{N}^{(i)}{=}\mathcal{S}^{(i)}{\setminus}\mathcal{F}^{(i)}$ denote the set of normal clients. Then, $z_j^{(i)}{=}\Delta_j^{(i)}$ for $j{\in}\mathcal{N}^{(i)}$, while for $j{\in}\mathcal{F}^{(i)}$, $z_j^{(i)}{=}*$. Here, $*$ denotes that $z_j^{(i)}$ can be an arbitrary vector in $\mathbb{R}^d$. Our proposal for fault mitigation, while securely aggregating the local updates from the participating clients, leverages the hardware guaranteed privacy/integrity features of Trusted Execution Environments (TEEs), particularly Intel SGX. In this work, we do not consider side-channel attacks on TEEs that are based on probing physical signals such as electromagnetic leaks and power consumption. Furthermore, we assume all the communication channels are secure (e.g. using Transport Layer Security (TLS)~\cite{dierks2008transport,turner2014transport}).\iffalse Furthermore, we note that the set of faulty clients (and likewise the set of normal clients) can vary during training, for example, a client exhibiting faulty behaviors due to some underlying hardware or software failures may recover after a few iterations. Hence, for $i{\in}\{1,\ldots,R{-}1\}$, $\mathcal{F}^{(i)}$ and $\mathcal{F}^{i+1}$ are not necessarily equal. \fi

In the following, we provide a background on TEEs and their applications in privacy-preserving machine learning.  
\subsection{Trusted Execution Environments}
{
Trusted Execution Environments (TEEs) such as Intel SGX \cite{costan2016intel}, ARM REALM \cite{ARMREALM} and ARMTrustZone \cite{alves2004trustzone} provide \textit{secure isolated execution environments}, where data confidentiality and computational integrity of the client's application is guaranteed by cryptographic hardware and software encryption. This has enabled system-based privacy guarantees for artificial intelligence (AI) applications involving sensitive user data. For instance, authors in~\cite{dokmai2021privacy} introduced a genotype imputation tool based on Intel SGX. Recently, TEEs have also been used for privacy-preserving aggregation (albeit without mitigation of faulty clients during training) of local updates from clients in federated learning in production lines of different companies including Meta~\cite{teesecureagg,mo2021ppfl,nguyen2022federated}.

In our proposal, we use Intel SGX that provides important features for privacy and integrity including \textit{remote attestation}, \textit{local attestation}, and \textit{sealing}. Using remote attestation, an encrypted communication channel can be established between the server and the relying party which provides private direct communication between the two. Local attestation is for secure communication between multiple enclaves on the same platform. Sealing provides a secure data saving for transferring data to the untrusted hardware while protecting data privacy.

While Intel SGX has been widely used for privacy-preserving AI applications, computations performed within SGX are restricted to execution on CPU in the current implementations. Moreover, TEEs generally provide a limited amount of secure memory (enclave) that is tamper-proof even from a root client. Our proposal DiverseFL leverages Intel SGX for detecting faulty clients and performing secure aggregation at each communication round. As we describe in the next section, DiverseFL requires low computation load of SGX with respect to each client, thus allowing multiple participating clients to be supported by a single TEE without causing overheads in the wall-clock time.  %For instance, Intel SGX provides $128$ MB as the enclave memory. If the size of the private data exceeds the TEE limit, it will pay a significant performance penalty for encryption and eviction of pages for swapping.
}

%H: removed the following statement as it tends to make federated learning weak: why don't all the client data to the server?
\iffalse Recently cloud providers such as IBM and Microsoft Azure support cloud instances with Intel SGX~\cite{intel}. SGX has various applications in DNN private computations. For instance, these works utilize TEEs for privacy-preserving DNNs inference ~\cite{hanzlik2018mlcapsule, ohrimenko2016oblivious, gu2018securing,tramer2018slalom,mo2020darknetz}, and training~\cite{hashemi2020darknight,hynes2018efficient,hunt2018chiron,ghareh2020mitigating, mo2021ppfl, fereidoonisafelearn}.\fi 

\section{The Proposed DiverseFL Algorithm}
\label{sec:diversefl}
In this section, we describe the various aspects of our proposal DiverseFL in detail. We also illustrate the effectiveness of the core per-client criteria of DiverseFL for fault mitigation. In Fig. \ref{fig:system}, we provide an overview of DiverseFL. \iffalse while an algorithmic description is provided in Algorithm \ref{algo:diverseFL}. \fi
\subsection{Description of DiverseFL}
DiverseFL has five fundamental steps as described next. 

\begin{figure*}[h!]
    \centering
    \includegraphics[scale =1.0,width=\linewidth]{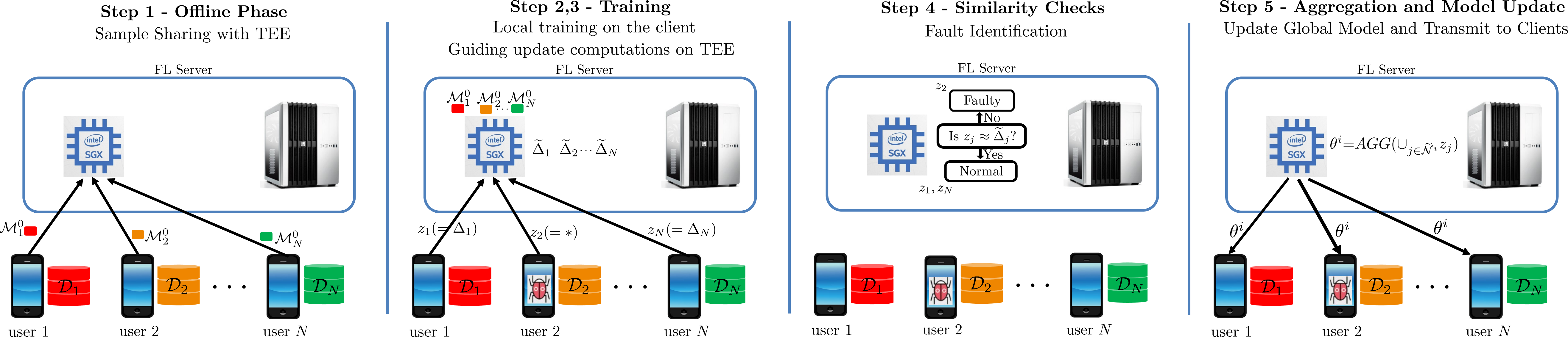}
    %\vspace{-12pt}
    \caption{Illustration of system components and general steps in DiverseFL for communication round $i\in[R]$. Without loss of generality, we have assumed that all clients participate in the communication round $i$. For brevity, we use $AGG(\cdot)$ in final step to jointly denote the aggregation of the potential normal clients as well as the global model update step.}
    \label{fig:system}
    %\vspace{-6pt}
\end{figure*}
{

{\textbf{ {Step 1: Sharing of small samples with TEE:}}} In the offline phase before training, the FL server asks each client $j{\in}[N]$ to draw a mini-batch $\mathcal{M}_j^{(0)}$ of size $s$ from its local dataset, and share it with a TEE-based secure enclave on the FL server, using a mutually agreed encryption key between the TEE and the client. For obtaining a sample mini-batch that is representative of its local dataset, a client first finds the number of features corresponding to each label in the local dataset (as data is non-IID across clients, certain labels may not be available in the client's dataset). Thereafter, the client computes the number of data points that need to be sampled from each local label set so that the final mini-batch of size $s$ has the same proportion of labels as in the local dataset. Finally, from each label set, the client samples the proportionate number of data points uniformly at random, and the final mini-batch is shared securely with TEE. 
\begin{remark}
The TEE enclave code can be verified by each client to guarantee that none of the samples it shares with the TEE leaves the TEE, and that the TEE only uses it for computing guiding updates (described in Step 3 below) during training. Furthermore, this sample sharing step occurs only once, in the offline phase before training begins. We further emphasize that the FL administrator can access neither the data samples or the guiding updates computed within the TEE, thus protecting data privacy. 
\end{remark}

}

In each communication round $i{\in} R$, the following four training steps are carried out.

\textbf{ {Step 2: Training on the clients:}}
Training in DiverseFL proceeds as follows. During $i$-th round, the server selects a subset $\mathcal{S}^{(i)}$ of the clients, and sends the current model $\theta^{(i-1)}$ to each client $j\in\mathcal{S}^{(i)}$. A normal client $j{\in}\mathcal{N}^{(i)}{\subseteq}\mathcal{S}^{(i)}$ performs $E$ local SGD updates as described in Section \ref{sec:fedsgd} to obtain the locally updated model $\theta_j^{(i,E)}$, and computes the model update $\Delta_j^{(i)}{=}(\theta^{(i-1)}-\theta_j^{(i,E)})$. It then uploads $z_j^{(i)}{=}\Delta_j^{(i)}$ to the master. Local updates from the faulty clients $\mathcal{F}^{(i)}{=}\mathcal{S}^{(i)}{\setminus}\mathcal{N}^{(i)}$ can be arbitrary vectors. 

\textbf{ {Step 3: Guiding model update computations on TEE:}} For each client $j{\in}\mathcal{S}^{(i)}$, the TEE on the FL server updates $\theta^{(i-1)}$ via gradient descent for $E$ iterations using the data sample $\mathcal{M}_j^{(0)}$. More formally, for $\tau{\in}\{1,\ldots,E\}$, the TEE obtains $\wtilde{\theta}_j^{(i,\tau)} {=} \wtilde{\theta}_j^{(i,\tau-1)}{-} \alpha^{(i)} \wtilde{g}_j^{(i,\tau)}$, where $\wtilde{g}_j^{(i,\tau)}{=} \grad_{\theta}l(\wtilde{\theta}_j^{(i,\tau-1)};\mathcal{M}_j^{(0)})$ and $\wtilde{\theta}_j^{(i,0)}{=}{\theta}^{(i-1)}$. The guiding model update is then computed as follows: $\wtilde{\Delta}_j^{(i)}$ ${=}$ $(\theta^{(i-1)}{-}\wtilde{\theta}_j^{(i,E)})$. 
\begin{remark}
The guiding model updates in TEE are computed concurrently with the client SGD computations. In our experimental setup, we demonstrate that each Intel SGX based enclave can perform guiding update computations for tens of clients within the time needed by a client to provide its local update to the TEE (see Fig. \ref{fig:TEE1} in Section \ref{sec:timing_analysis} for details). Hence, Step 3 occurs concurrently with Step 2 without any loss in training speed. 
\end{remark}
\textbf{ {Step 4: Fault identification by FL server:}} The secure TEE based enclave within the FL server receives the model updates from the clients and estimates whether the update from client $j{\in} [N]$ is faulty or not based on the extent of \textit{similarity between $z_j^{(i)}$ and $\wtilde{\Delta}_j^{(i)}$}, using $\wtilde{\Delta}_j^{(i)}$ as a surrogate for $ {\Delta}_j^{(i)}$. It considers the following two similarity metrics:
\begin{align}
&\text{Direction Similarity}: C_1=\text{sign}(\wtilde{\Delta}_j^{(i)}\cdot z_j^{(i)}), \label{eq:simi1}\\
&\text{Length Similarity}: C_2=\frac{\norm{z_j^{(i)}}_2}{\norm{\wtilde{\Delta}_j^{(i)}}_2} \label{eq:simi2}.
\end{align}The main idea is that $\wtilde{\Delta}_j^{(i)}$ approximates the model update $ {\Delta}_j^{(i)}$ of a normal client. Therefore, the similarity between $\wtilde{\Delta}_j^{(i)}$ and $z_j^{(i)}{=} {\Delta}_j^{(i)}$ when $j{\in}\mathcal{N}^{(i)}$   (and likewise the dissimilarity between $\wtilde{\Delta}_j^{(i)}$ and $z_j^{(i)}$ when $j{\in}\mathcal{F}^{(i)}$) can be leveraged for fault mitigation. Thus, when $C_1{>}0$, it suggests that $z_j^{(i)}$ is \textit{approximately} in a similar direction as $ {\Delta}_j^{(i)}$ for client $j$. Similarly, when $C_2{\sim} 1$, it suggests that the norms $\norm{z_j^{(i)}}_2$ and $\norm{ {\Delta}_j^{(i)}}_2$ are \textit{approximately} equal, while very large or very small values of $C_2$ suggest a large deviation between $z_j^{(i)}$ and $ {\Delta}_j^{(i)}$. These arguments motivate our following two key conditions for fault mitigation that are verified within the TEE for the local update from each participating client $j{\in}\mathcal{S}^{(i)}$:
\begin{align}
\text{Condition } 1&: C_1{>}\epsilon_1, \label{eq:cond1}\\
\text{Condition } 2&: \epsilon_2{<}C_2{<}\epsilon_3,\label{eq:cond2}
\end{align}
where $\epsilon_1$, $\epsilon_2$ and $\epsilon_3$ are hyperparameters in DiverseFL. Any client $j{\in}\mathcal{S}^{(i)}$ is flagged as a faulty node if either of the above two conditions is not satisfied.

\textbf{ {Step 5: Secure aggregation and global update:}}  Let $\wtilde{\mathcal{F}}^{(i)}$ denote the set of faulty nodes as estimated in Step 4, and likewise let $\wtilde{\mathcal{N}}^{(i)}{=}\mathcal{S}^{(i)}{\setminus}\wtilde{\mathcal{F}}^{(i)}$ be the estimated set of normal clients. The following global model update is executed within the TEE:
\begin{equation}
\label{eq:modUpdateScheme}
    \theta^{(i)} = \theta^{(i-1)} -\frac{1}{|\wtilde{\mathcal{N}}^{(i)}|}\sum_{j{\in}\wtilde{\mathcal{N}}^{(i)}}z_j^{(i)}\iffalse+\lambda \theta^{(i-1)}\fi.
\end{equation}
As the aggregation step is also carried out securely within the TEE at the FL server, DiverseFL ensures that there is no privacy leakage from the local model updates shared by the clients.

 Next, we illustrate the effectiveness of the core per-client criteria of DiverseFL for fault mitigation.
%% #SP the following algorithm will be moved to appendix %%

\subsection{Effectiveness of per-client fault mitigation} 
\label{sec:criteria_effectiveness}
To illustrate the effectiveness of using the two similarity metrics in the detection of a faulty node, we consider the product $C_1{\times}C_2$ of the metrics in equations~\eqref{eq:simi1} and~\eqref{eq:simi2}, and plot its variation for different clients across iterations. For this, we consider the setting of $23$ clients and neural network training with MNIST dataset, in the presence of label flip fault, as described in Section \ref{sec:non_targeted}.  Data is distributed non-IID, a data sharing of $1\%$ is used, and $5$ out of the $23$ clients are assumed to be faulty during training. For examining how the two similarity metrics behave in each round, we consider the oracle algorithm, named OracleSGD, in which only the normal client updates are aggregated for updating the model in each round, i.e. only the updates of the $18$ normal clients are aggregated for secure aggregation and global model update. Further setup details are deferred to Section \ref{sec:non_targeted}. 
\begin{figure}[h!]
%\vspace{-4mm}
\centering
\includegraphics[scale=1.0,width=0.9\linewidth]{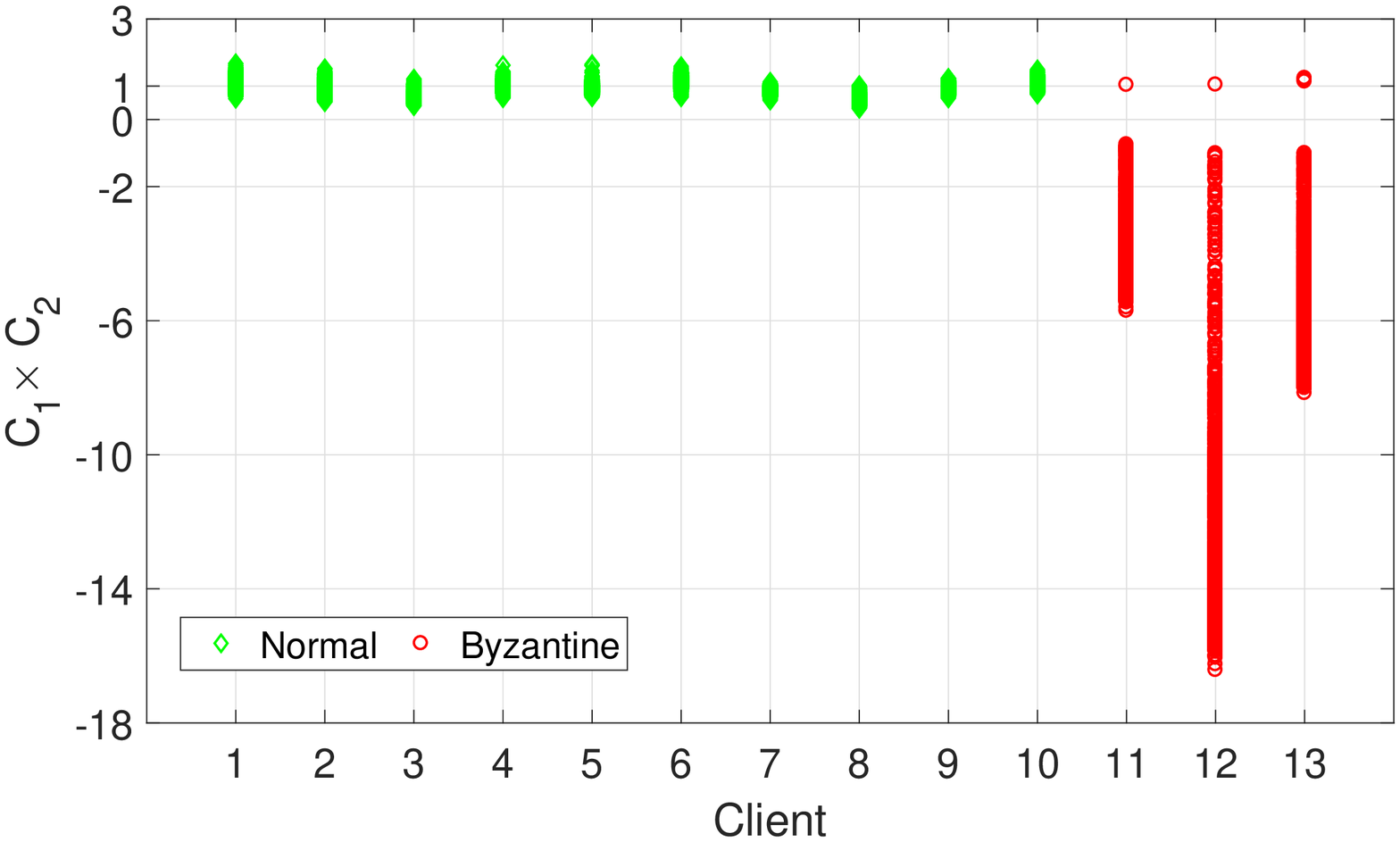}
\caption{The values of $C_1{\times}C_2$ in the $1000$ training rounds are plotted in red for faulty clients, and in green, for normal clients. For normal clients, $C_1{>}0$ exclusively, and $C_2$ is concentrated around $1$. For faulty clients, $C_1{<}0$ in almost all iterations, and $C_2$ varies significantly.}
\label{fig:schemeFig}
% \vspace{-4mm}
\end{figure}

 In Fig. \ref{fig:schemeFig}, we plot the product $C_1{\times}C_2$ for $1000$ training rounds for $10$ different normal clients and $3$ different faulty clients, where green color marker denotes that client is normal, while red color marker denotes that client is faulty. As can be seen from Fig. \ref{fig:schemeFig}, $C_1{>}0$ exclusively and $C_2$ is concentrated around $1$ in all rounds for each normal client. For clients exhibiting faulty behaviour, $C_1{<}0$ \textit{almost} exclusively, and there is a large variation of $C_2$ during training. For example, out of the $1000$ training rounds, $C_1{>}0$ for only $1$ round for both client $11$ and client $12$, and for $3$ rounds for client $13$. Similar results were observed for the other clients. Therefore, condition $1$ presented in \eqref{eq:cond1} is critical for mitigating faults. Condition $2$ presented in \eqref{eq:cond2} is complementary to condition $1$, as it helps to keep the deviations from the true updates low, thus helping to filter the faulty updates in scenarios in which $\Delta_j^{(i)}$ gets simply scaled with a large absolute value due to an underlying software or hardware error. As we demonstrate in experiments in Section \ref{sec:experiments}, the proposed similarity criteria results in superior performance of DiverseFL.

In Appendix, we provide a convergence analysis for DiverseFL to show how the per-client similarity criteria can lead to convergence when data is non-IID. In the following section, we present results from our experiments. 
% \subsection{Convergence Analysis}
% \label{sec:theory}

%%%%%%%%%%%%%%%%%%%%%%%%%%%%%%%%%%%%%%%%%%%%%%%%%%%%%%%%%%

\section{Experiments}
\label{sec:experiments}

\noindent\textbf{Experimental Setup:}  We implement DiverseFL using an SGX-enabled FL server built using Intel (R) Coffee Lake E-2174G 3.80GHz processor. The server has 64 GB RAM and supports Intel Soft Guard Extensions (SGX). \iffalse The server implements the enclave code where client samples are stored and guiding model updates are computed. \fi For SGX implementations, we use Intel Deep Neural Network Library (DNNL) for designing the DNN layers including the Convolution layer, ReLU, MaxPooling, and Eigen library for Dense layer. %We used Keras 2.1.5, Tenseflow 1.8.0, and Python 3.6.8 to implement various DNN models in the FL server. 
%The implementation is available as open source. 
Clients are based on the popular edge device, Raspberry PI 3 consisting of Quad Core Armv7 CPUs, based on BCM2835 hardware, running Debian 10.9 with Linux Kernel 5.10.17. We built the PyTorch for ARMv7 ISA running Linux and installed Torch on Raspberry PI using this build. The link bandwidth between the FL server and each client is 100Mbps. Unless stated otherwise, we consider $N{=}23$ clients, and all clients participate in each round.

\begin{figure*}[ht]
    \centering
    \includegraphics[width=0.98\linewidth]{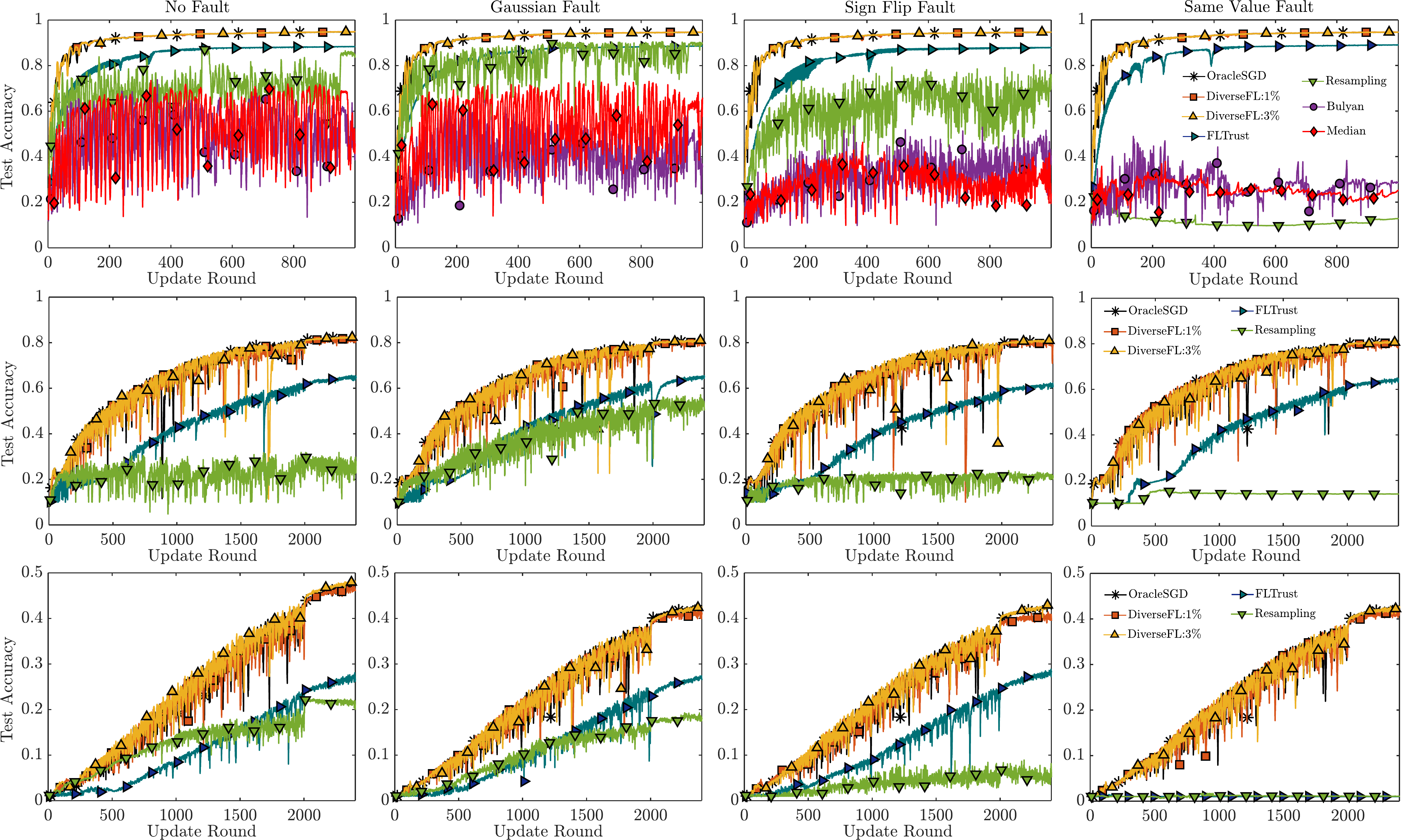}
    %\vspace{-4pt}
    \caption{Top-1 accuracy for neural network training with MNIST (first row of plots), CIFAR10 (second row of plots) and CIFAR100 (third row of plots). DiverseFL with both 1\% and 3\% sample sharing achieves close to OracleSGD performance under all scenarios. Even with relatively simpler training setting with MNIST and a small neural network, prior benchmarks degrade in performance in one or more scenarios. Furthermore, the three sets of results demonstrate that increasing complexity of dataset and training model increases the performance gap of prior benchmarks for non-IID setting.}
    \label{fig:combined_non_targeted}
    %\vspace{-4pt}
\end{figure*}
%, which has performance capabilities similar to mobile devices, with the added advantage that latest Raspberry PI nodes supports running PyTorch code. 
%It has 1GB of DRAM and a 64G Western Digital PiDrive connected through USB. Parameter server consisted of an Intel(R) Coffee Lake E-2174G 3.80GHz processor and supports Intel Soft Guard Extensions (SGX) to evaluate the performance of TEE in computing the guiding gradients on the samples. Intel SGX provides $128 MB$ of secure enclave. Also, the link bandwidth between the central server and each client is 100Mbps. 

The secure TEE enclave in DiverseFL enables secure aggregation by default. Therefore, in our first set of experiments in Section \ref{sec:nn_training}, we demonstrate the superiority of DiverseFL in fault mitigation by comparing it with SOTA algorithms that consider only fault mitigation and not secure aggregation. Thereafter, in Section \ref{sec:timing_analysis}, we present detailed wall-clock timing analysis for implementing DiverseFL in our TEE-assisted federated learning setup and demonstrate the scalability of our proposal in practice. Finally, we provide ablation studies for DiverseFL in Section \ref{app:ablation}.

\subsection{Performance of DiverseFL}
\label{sec:nn_training}
\label{sec:non_targeted}
\noindent{\textbf{{Schemes}}}: In the following, we first summarize the benchmark schemes that we simulate for comparison with DiverseFL.  
\textit{OracleSGD}: The identities of the faulty clients are known at the FL server and the global model is updated using the aggregate of weight updates from only the normal clients. \textit{Median} \cite{yin2018byzantine}: An element-wise median of all the received updates from the clients is computed, which is used to update the global model. \textit{Bulyan} \cite{guerraoui2018hidden}: FL server applies Krum \cite{blanchard2017machine} recursively to remove a set of ${2f}$ possibly faulty clients, and then applies an element-wise trimmed mean operation to filter out ${2f}$ entries along each dimension. \textit{Resampling} \cite{he2020byzantine}: A novel resampling approach of \cite{he2020byzantine} is leveraged to first create a set of $N$ modified updates. For each of them, FL server samples $S_R$ number of client updates uniformly at random and averages them. It then computes the final update by applying Median on the $N$ modified updates. \textit{FLTrust} \cite{cao2020fltrust}: The FL server computes a model update $\wtilde{\Delta}_M$ on its root dataset $\mathcal{D}_M$, a clean small training dataset collected independently of clients' data before training. Client updates are projected onto the root update, and then a weighted aggregation of them is carried out. For best case scenario of FLTrust, we construct root dataset by randomly selecting a subset of the training dataset. 

\noindent{\textbf{{Fault types}}}: For a faulty client $j{\in}[N]$ in round $i{\in}R$, we consider four popular faults.
\textit{Gaussian}: The message to be uploaded gets set to $z_j^i$, where the elements follow a Gaussian distribution with mean $0$ and standard deviation $\sigma_{\text{G}}$. 
\textit{Sign Flip}: The sign of each entry of the local model update is flipped before uploading to the server.
\textit{Same Value}: The message to be uploaded gets set to $z_j^i=\sigma_{\text{S}}\mathbf{1}$, where $\mathbf{1}{\in}\mathbb{R}^d$ is  all-one vector.
\iffalse\textit{Label Flip}: Class label $c$ of each data point in the local training mini-batches gets flipped to  $c_n{-}c$, where $c_n$ is the number of classes during data reading at training time.\fi

\noindent{\textbf{Datasets, Models and Hyperparameters}}: We consider three different benchmark datasets -- MNIST, CIFAR10, and CIFAR100 \cite{mnist2010, cifar}. While MNIST and CIFAR10 have $10$ classes each, CIFAR100 has $100$ classes. For simulating non-IID, for each dataset, the training data is sorted as per class and then partitioned into $N$ subsets, and each subset is assigned to a different client. For MNIST, we use a neural network with three fully connected layers (referred to as 3-NN), and each of the two intermediate layers has $200$ neurons. For CIFAR10/CIFAR100, we use VGG network \cite{simonyan2014very}, specifically VGG-11. We replace the batch normalization layers with group normalization layers as batch norm layers have been found to perform suboptimally when data is non-IID. For group norm, we set each group parameter such that each group has 16 channels throughout the network. Glorot uniform initializer is used for weights in all convolutional layers, while default initialization is used for fully connected layers. 

For Gaussian and same value faults for each dataset, we set $\sigma_{\text{G}}{=}\sigma_{\text{S}}{=}10$. For MNIST, we set $R$ to $1000$, use an initial learning rate of $0.06$ with step decay of $0.5$ at rounds $500$ and $950$. For CIFAR10 and CIFAR100, warmup is used for the first $1000$ rounds, increasing the learning rate linearly from $0.05$ to $0.1$. Also, number of rounds $R{=}2400$ and learning rate is stepped down by a factor of $0.4$ at iteration $2000$. 

For each scenario, local training batch size is $10\%$ of the local dataset, regularization $\lambda=0.0005$, $E=1$. For Resampling, $S_R=2$ and for FLTrust, $1\%$ random subset of training data is used as root dataset. 
For DiverseFL, we consider two sampling size scenarios of $1\%$ and $3\%$ of the local dataset. Furthermore, we keep $(\epsilon_1{,}\epsilon_2{,}\epsilon_3)=(0{,}0.5{,}2)$ in \textit{all our experiments}. We obtained these values through an initial simple experiment, as described in Sec. \ref{sec:criteria_effectiveness}, which demonstrated that these values could be highly efficient for filtering faults. In our extensive evaluations, these thresholds lead to DiverseFL's consistently superior and near OracleSGD accuracy in all our experiments, thus demonstrating the universality of these values.

\noindent
{\textbf{Results}}: Fig. \ref{fig:combined_non_targeted} illustrates the results for MNIST (with 3-NN), CIFAR10 (with VGG-11) and CIFAR100 (and VGG-11), which have 10, 10, 100 classes respectively. For MNIST, DiverseFL (both 1\% and 3\% sampling rates) outperforms prior approaches by significant margins in all cases of faults, and almost matches the performance of OracleSGD. For CIFAR10, DiverseFL with both $1\%$ and $3\%$ sampling sizes perform quite close to OracleSGD, with the $3\%$ case performing slightly better. Note that for CIFAR10 and CIFAR100, we consider mainly the prior baselines for non-IID data, as both Median and Bulyan, which are primarily fault tolerant approaches for the IID data, perform quite poorly across all faults. Starting with the complex CIFAR100 with large number of classes in the dataset, the sampling rate of $3\%$ provides better performance than 1\%. Nevertheless, even with $1\%$ sampling, DiverseFL outperforms all prior schemes by significant margins. 

It is interesting to note that for both Resampling and FLTrust, the margin of convergence performance from OracleSGD is much larger for CIFAR10/CIFAR100 than for MNIST, due to greater dataset complexity and larger neural networks that introduce much larger variations, both within and across client updates. Hence, DiverseFL is ideally suited for complex models that are becoming more common in FL setting. \iffalse It is an interesting future direction to theoretically quantify the relationship between sample size and number of classes. \fi We also note that as FLTrust utilizes the root update to normalize and aggregate the client updates, it achieves stable convergence and improved accuracy in comparison to prior schemes. However, as each client's data distribution is quite different from root data, projection of client updates on the root update leads to loss of information due to which performance of FLTrust is much lower than that of DiverseFL. 
%  \vspace{-3mm}
\subsection{Scalability of DiverseFL}
\label{sec:timing_analysis}
In the previous subsection, we have demonstrated that DiverseFL's accuracy can consistently outperform prior schemes in the presence of faults. DiverseFL, however, does put a computational overhead on the TEE as in addition to leveraging TEE for secure aggregation, DiverseFL also requires computation of the guiding updates for fault mitigation. In this section, we demonstrate that each TEE can accommodate the guiding update computations of many dozens of clients \textit{without causing any additional latency delays}. To quantify this aspect, we used Raspberry Pi 3 as an edge device as commonly used in prior works~\cite{zhao2018deepthings, wang2018edge}. 
\begin{figure}[h]
\centering
\includegraphics[width=1.0\linewidth]{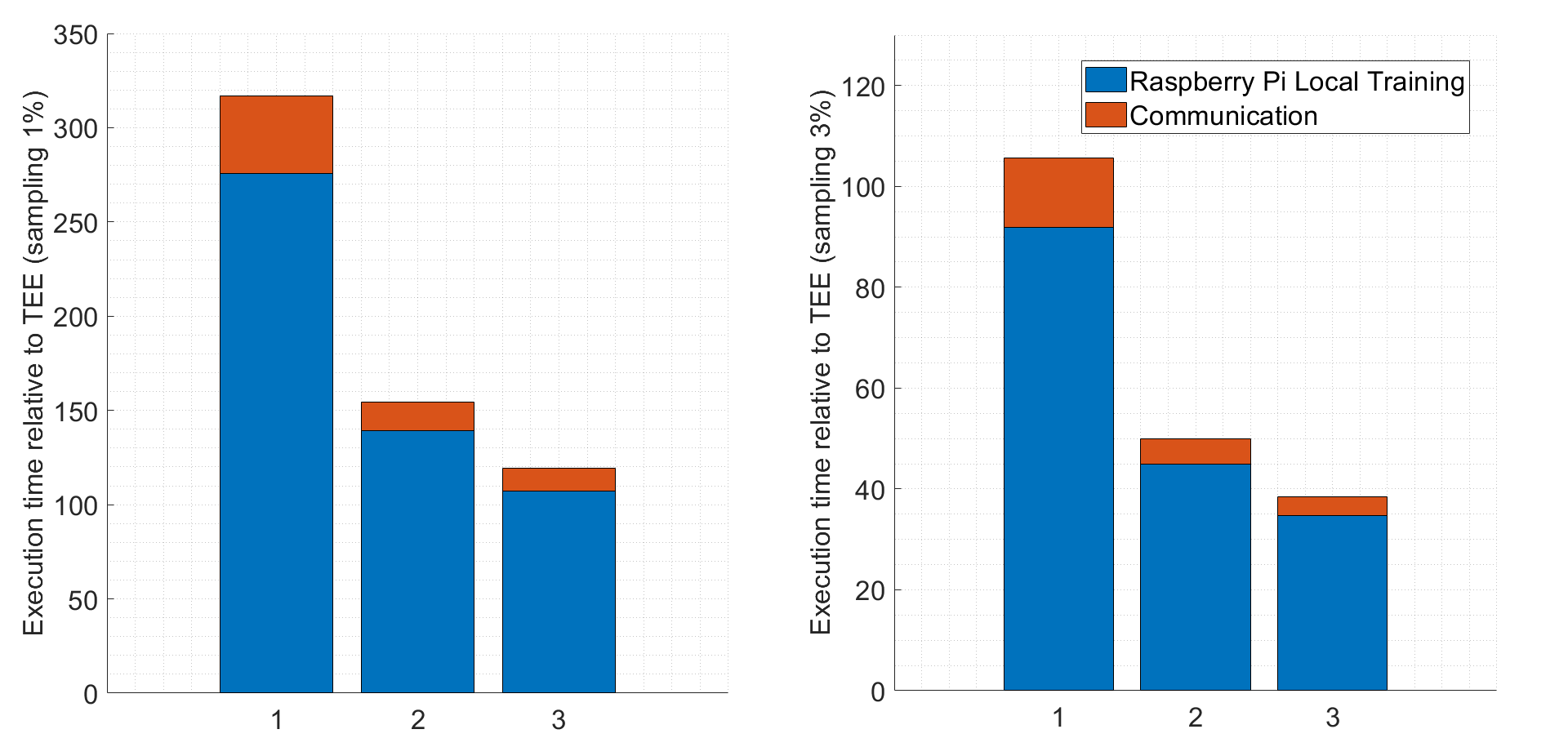}
\caption{Execution time on client (computation + communication time) relative to TEE's guiding update computation. 1: MNIST/3-NN, 2: CIFAR10/VGG-11, and 3: CIFAR100/VGG-11. 1\% sampling used  in (a) and 3\% sampling in (b). Single TEE supports many clients without stalling FL execution.}
\label{fig:TEE1}
%  \vspace{-3mm}
\end{figure}

Fig. \ref{fig:TEE1} illustrates the details of the timing analysis for the different networks and datasets considered in the previous subsection. Fig.~\ref{fig:TEE1}-(a) shows for different models and datasets, the relative execution time of a client device training compared to the SGX execution time to compute guiding update of a single client with a sampling rate of 1\%. Here, the edge device training speed is split into two components: update computation time and the time to communicate the update with the FL server. Consider for example CIFAR10/VGG-11 scenario. Fig.~\ref{fig:TEE1}-(a) shows that the TEE computation is $150$ times faster than an edge device's combined computation and communication time. Thus, the TEE can support $150$ clients. Similarly, the TEE can support about $119$ clients with CIFAR100. We also observe that TEE's relative performance is lower with VGG-11 compared to 3-NN. This is because of the memory limitation of TEE (128 MB in our current implementation). Hence, when the model does not entirely fit within TEE at once due to its size, causing some memory overheads inside TEE.

Fig.~\ref{fig:TEE1}-(b) illustrates that by increasing the sampling rate to $3\%$, the maximum number of clients a TEE can support decreases since the number of data samples that a TEE needs to process increases. Nevertheless, a single TEE can still support between $38$ to $105$ clients based on the model and data size. To scale out the system to support even more clients, one can utilize more instances of the TEEs. Since guiding update computations are done on a per-client basis this scaling approach to multiple TEEs is quite efficient without any substantial synchronization overheads. 

As we have shown above, one TEE can support many clients without causing any performance degradation. Moreover, an FL entity like Meta can easily use more TEEs to support many more clients in each communication round. While not the focus of our work, recent advancements that have greatly improved TEE capabilities can be easily combined with DiverseFL for faster TEE processing for further scalability. For example, 3rd Generation Intel® Xeon® Scalable Processor platforms support 1 TB of secure memory \cite{sgx1tb}, as opposed to the 128 MB TEE used in our FL setup. Also, SGX can support multi-threading that can further improve the performance of TEE operations \cite{tramer2018slalom}. Furthermore, there are recent mechanisms (e.g., DarKnight \cite{hashemi2021darknight}, AsymML \cite{niuasyml}) that securely offload the computation intensive operations to the faster GPUs. Above orthogonal advancements can be combined with DiverseFL for scaling TEE performance at the FL server, in terms of supporting both larger models as well as larger number of clients.  

%As we demonstrated even with this worst case scenario, the system performs well and it is still scalable. 
%  \vspace{-2mm}
\subsection{Ablation Studies}
\label{app:ablation}
We present results from our ablation studies for DiverseFL with respect to different hyperparameters, focusing here on CIFAR10 dataset and Gaussian fault. The trends for other datasets and faults were quite similar to those presented below and have been excluded for brevity.   
\subsubsection{Number of Faulty Clients}
\label{sec:byzant_resiliency}
The per-client fault mitigation approach makes DiverseFL applicable to an arbitrary number of faulty clients. To demonstrate this in practice, we consider the setting in Section \ref{sec:non_targeted} with $N=23$ clients, and report the final test accuracy for both OracleSGD and DiverseFL (with $3\%$ sample sharing), for $f=5$ as well as for $f=17$ (which is equivalent to $\sim 75\%$ faulty nodes in the system). As demonstrated by the results in Tables \ref{tab:cifar10_increase_byzant}, DiverseFL almost matches the performance of OracleSGD even when more than a majority of nodes are faulty.
\begin{table}[h!]
\caption{Final test accuracies for CIFAR10 under Gaussian fault for different number of faults. Similar results were observed for other faults and datasets.}
\label{tab:cifar10_increase_byzant}
\centering
\begin{tabular}{|c|c|c|c|c|}
\hline
& \multicolumn{2}{c|}{Test Accuracy ($\%$): $f{=}5$} & \multicolumn{2}{c|}{Test Accuracy ($\%$): $f{=}17$} \\
\hline
 & OracleSGD & DiverseFL & OracleSGD & DiverseFL \\
\hline
Gaussian & 81.0 & 80.8 & 28.5 & 28.5 \\
\hline
\end{tabular}
\end{table}

\subsubsection{Random Sampling and Multiple Local Iterations}
\label{sec:all_multiple}
In FL, it is a common practice to have multiple local iterations in each communication round, i.e., to have $E>1$ to reduce communication rounds. Furthermore, the FL server typically samples a subset of clients to participate in each communication round. Hence, we evaluate the performance of DiverseFL when multiple local SGD training steps are implemented in each communication round, and assume that the server samples uniformly at random a fraction of the total clients in each communication round. For this, we consider CIFAR10 dataset and the VGG-11 model as described in Section \ref{sec:non_targeted}. Inspired by the simulation setting described in \cite{mcmahan2017communication}, we consider an FL setting with $N=100$ clients and randomly select $25$ of them which are faulty throughout training. Furthermore, to model heterogeneity along the lines of~\cite{mcmahan2017communication}, the training dataset is first sorted as per class, then partitioned into $k N$ shards. Then, each client is assigned $k$ shards randomly without replacement. We consider $E{\in}\{1,4\}$, a sampling size of $3\%$ for DiverseFL, carry out training for a total of $R=5000$ rounds, and the learning rate is stepped down by a factor of $0.4$ at iterations $\{2000,3000,4000\}$. All other hyperparameters are as described in Section \ref{sec:non_targeted}. 

For comparison, we consider the OracleSGD scheme, with $E=4$. The results are illustrated in Fig. \ref{fig:cifar10_nn_multiple}.  
\begin{figure}[h]
    \centering
    \includegraphics[width=0.99\linewidth]{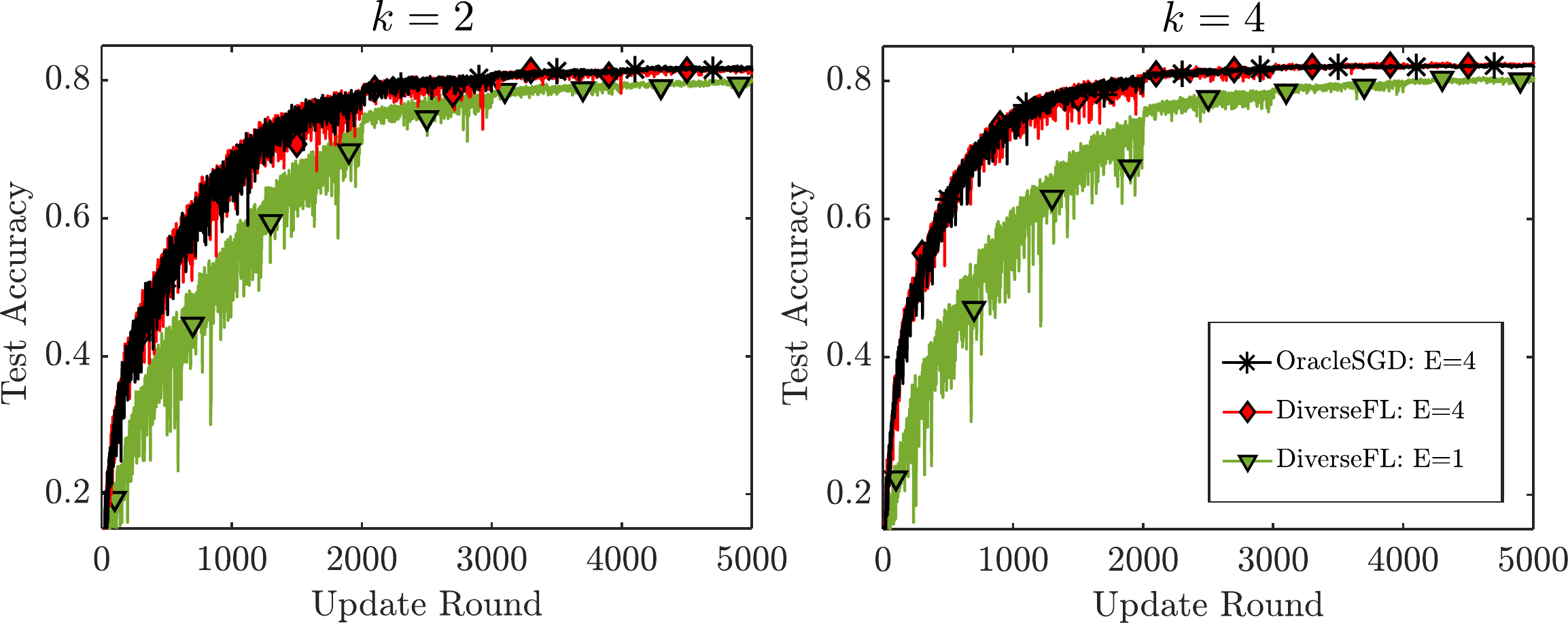}
    %\vspace{-20pt}
    \caption{Performance evaluation of DiverseFL with different number of local iterations with OracleSGD. }
    \label{fig:cifar10_nn_multiple}
\end{figure}
As demonstrated by Fig. \ref{fig:cifar10_nn_multiple}, when $E$ is increased, DiverseFL provides better convergence rate per communication round, further showing that DiverseFL is well suited for federated learning. Additionally, DiverseFL maintains its superior faulty resiliency, as demonstrated by its close to OracleSGD performance when $E=4$ under Gaussian fault. Additionally, DiverseFL is consistent across different data heterogeneity scenarios, demonstrating the wide applicability of DiverseFL for fault resilient secure aggregation in FL.
% \fi

\section{Conclusion}
\label{sec:conclusions}
We consider the problem of making secure aggregation in federated learning (FL), resilient to faults that arise at the clients during the iterative FL training procedure. The problem is particularly quite challenging when clients have data heterogeneity, as even the updates from the normal clients are quite dissimilar. As a result, most of the prior fault tolerant methods, which leverage the similarity among the updates from the benign clients as a marker to determine faults, degrade significantly in performance. We propose a Trusted Execution Environment (TEE) based novel solution, named DiverseFL, that simultaneously ensures fault resiliency and secure aggregation at the FL server, even when data across clients is heterogeneous. As a key contribution, we develop a per-client approach for fault mitigation. It leverages during training, the similarity of the model update received from a client and its associated guiding update, which is computed inside the TEE (at FL server) on a small representative sample of client's local data that the client shares securely with TEE only once during the offline phase before training. Clients whose local model updates diverge from their associated guiding updates are tagged as being faulty, while the remaining are aggregated within the TEE enclave for updating global model. The TEE-based enclave enables client model privacy and protection against potential privacy leakages from sample sharing and guiding update computations, thus enabling secure aggregation as well as fault mitigation. We demonstrate through extensive experimental results that DiverseFL improves the model accuracy and fault resiliency of secure FL with non-IID data significantly as compared to the prior benchmarks. Furthermore, we provide a convergence analysis of DiverseFL under data heterogeneity.
\iffalse
Motivated by the problem of making secure aggregation in federated learning (FL), with non-IID data across clients,  resilient to faults ,   
\fi
\iffalse
Motivated by the problem of making secure aggregation in federated learning (FL) (with non-IID across clients) resilient to faults that arise at the participating clients during training, we propose a Trusted Execution Environment (TEE) based novel solution named DiverseFL. As a key contribution, that the novel per-client approach of fault mitigation in DiverseFL improves the model accuracy and fault resiliency of secure FL with non-IID data significantly as compared to the prior benchmarks, that mainly rely on the similarity of updates across clients.  Furthermore, TEE-based enclave enables client model privacy and protection against privacy leakage. We provide extensive experimental results demonstrating gains of up to $\sim16\%$ in absolute test accuracy, in comparison to prior benchmark schemes. We also provide a convergence analysis of DiverseFL for non-IID data distribution.

\fi

\bibliographystyle{ieeetr}
%vspace{-2mm}
\bibliography{main}
\appendix

%  \vspace{-3mm}
\section{Convergence Analysis}
\label{app:thm}

We now provide a convergence analysis for DiverseFL for non-IID data distribution, adapting the proof developed in \cite{cao2020fltrust}. While \cite{cao2020fltrust} uses trust bootstrapping using a small root dataset, in DiverseFL, a per-client criteria is applied, which makes our analysis different. Furthermore, \cite{cao2020fltrust} assumes the root dataset and clients have samples belonging to the same data distribution, while we consider non-IID data across clients.  We recall from Section \ref{sec:probSetup} that $\mathcal{D}_j$ denotes the  data distribution at client $j\in[N]=\{1,\ldots,N\}$, and the optimal model $\theta^*\in \Theta$ minimizes the loss function $\frac{1}{N}\sum_{j=1}^N F_j(\theta)$, where $F_j(\theta)=\Expc_{\zeta_j}(l(\theta;\zeta_j))$ denotes the local loss function at client $j{\in}[N]$ corresponding to the   local data distribution at client $j$. In the following, we list the standard assumptions of our convergence result:\\
% \begin{comment}
% $$\frac{1}{2}{||}hello{||}$$
% $$\left\|\frac{1}{2}\right\|$$
% $$\norm{\frac{1}{2}}$$
% $$\frac{\mu}{2}\norm{\theta-\hat{\theta}}^2$$
% $$\frac{\mu}{2}\left\|\theta-\hat{\theta}\right\|^2$$
% \end{comment}
\textit{Assumption 1} ($\mu$-Strong convexity). \textit{For any $\theta$, $\hat{\theta}{\in}\Theta$, $j{\in}[N]$:
\begin{equation}
    F_j(\theta)\geq F_j(\hat{\theta})+\left<\grad F_j(\hat{\theta}),\theta-
\hat{\theta}\right>+\frac{\mu}{2}\norm{\theta-\hat{\theta}}^2.
\end{equation}
}

\noindent\textit{Assumption 2} ($L$-Lipshitz continuity). \textit{For any $\theta$, $\hat{\theta}{\in}\Theta$, $j{\in}[N]$:
\begin{equation}
    \norm{\grad F_j(\theta)-\grad F_j(\hat{\theta})}  \leq L\norm{\theta-\hat{\theta}}.
\end{equation}
Furthermore, let $\mathcal{M}_j{\sim}\mathcal{D}_j$ denote any batch of data from the   data distribution $\mathcal{D}_j$ of client $j\in[N]$ such that $|\mathcal{M}_j|\geq s$. Then, similar to the empirical form of Lipshitz continuity used in the convergence analysis in \cite{cao2020fltrust}, for any $\delta>0$, we make the following assumption: 
\begin{equation}
    \Prob\left(\sup_{\theta,\hat{\theta}\in\Theta, \theta\neq\hat{\theta}}\frac{\norm{l(\theta;\mathcal{M}_j)-l(\hat{\theta};\mathcal{M}_j)}}{\norm{\theta-\hat{\theta}}}\leq L_1\right)\geq 1-\frac{\delta}{3}.\nonumber
\end{equation}
}

\noindent\textit{Assumption 3} (Boundedness). \textit{Let $\mathcal{M}_j{\sim}\mathcal{D}_j$ denote any batch of data from $\mathcal{D}_j$ for $j{\in}[N]$, and let $h(\mathcal{M}_j,\theta){=}\grad l(\theta;\mathcal{M}_j)-\grad l(\theta^*;\mathcal{M}_j)$ for $\theta{\in}\Theta$. Let $B{=}\{v{\in}\mathbb{R}^d{:}\, \norm{v}=1\}$ denote the unit sphere. We assume that for any $\theta{\in}\Theta$, $\theta{\neq}\theta^*$ and for any unit vector $v{\in} B$, $\grad l(\theta^*;\mathcal{\mu}_j)\cdot v$ is sub-exponential with $\sigma_1$ and $\gamma_1$, while ${(}h(\mathcal{M}_j,\theta)-\mathbb{E}(h(\mathcal{M}_j,\theta)){)}  \cdot$ $v{/}\norm{\theta-\theta^*}$ is sub-exponential with $\sigma_2$ and $\gamma_2$, where $x_1\cdot x_2$ denotes the dot product between $x_1$ and $x_2$. More formally, there exist positive constants $\sigma_1,\sigma_2,\gamma_1, \text{and}\, \gamma_2$ such that for any $\theta\in\Theta$, $\theta\neq\theta^*$, for any unit vector $v\in B$, and $\forall |\xi|\leq \min\{1/\gamma_1,1/\gamma_2\}$, we have the following:}
\begin{equation}
    \sup_{v\in B}\Expc\left(\exp\left({\xi \left(\grad l(\theta^*;\mathcal{M}_j)\cdot v\right)}\right)\right)\leq \exp\left(\frac{\sigma_1^2 \xi^2}{2}\right),\nonumber
\end{equation}  
\begin{align}
            \sup_{v\in B,\theta\in\Theta}&\Expc\left(\exp\left({\frac{\xi \left(h(\mathcal{M}_j,\theta)-\Expc(h(\mathcal{M}_j,\theta))\cdot v\right)}{\norm{\theta-\theta^*}}}\right)\right)\nonumber\\
            &\leq \exp\left(\frac{\sigma_2^2 \xi^2}{2}\right).
\end{align}
\textit{Furthermore, we assume $\beta$-boundedness on data heterogeneity}: 
\begin{equation}
    \norm{\grad F_j(\theta)-\frac{1}{N}\sum_{j\in[N]}\grad F_j(\theta)}\leq \beta,\,\forall j\in[N],\,\forall\theta \in \Theta.
\end{equation}

\textit{Additionally, we assume that the model parameter space $\Theta$ is bounded, i.e., there exists $r{>}0$ such that $\Theta{\subset}\{\theta{:} \norm{\theta{-}\theta^*}{\leq} r\sqrt{d}\}$.   }
\iffalse Under Assumptions $1-3$, we prove the convergence of DiverseFL, under the following additional assumptions: (1) each normal client executes a single local SGD iteration in each communication round, (2)  all clients participate in each round of training, and (3) each client shares a clean sample with the server.
{Next, we present our convergence result.}\fi 
\begin{theorem*}
\label{thm:convergence}
Assume $\epsilon_1=0$, $\epsilon_2=1/\epsilon_3$, $E=1$, $|\mathcal{S}^{(i)}|=N$ $\forall i\in[R]$, $\mathcal{M}_j^{(0)}\sim \mathcal{D}_j$, $|\mathcal{M}_j^{(0)}|=s$ $\forall j\in[N]$ and Assumptions $1$, $2$ and $3$ hold. Then, for an arbitrary number of malicious clients, for any $\delta>0$ and with a constant learning rate of $\alpha{=}\mu/(2L^2)$, DiverseFL achieves the following error bound with a probability of at least $1{-}\tilde{\delta}$:  
\begin{equation}
\label{eq:thm}
    \norm{\theta^i-\theta^*}\leq (1-\rho)^i \norm{\theta^0-\theta^*}+\frac{\alpha(2+\epsilon_3)(4\Gamma_1+\beta)}{\rho},
\end{equation}
where $\theta^i$ denotes the global model after communication round $i$, \\
$\Gamma_1=\sigma_1\sqrt{\frac{2}{s}}\sqrt{d\log(6)+\log(3/{\delta})}$, ${\delta}=\frac{\tilde{\delta}}{N}$,\\
$\rho=1-\left(\sqrt{1-\frac{\mu^2}{4L^2}}+8\alpha(2+\epsilon_3)\Gamma_2+\alpha(1+\epsilon_3) L\right)$,\\
$\Gamma_2=\sigma_2\sqrt{\frac{2}{s}}\sqrt{ d\log\left(\frac{18L_2}{\sigma_2}\right)+\frac{1}{2}d\log(\frac{s}{d})+\log\left(\frac{6\sigma_2^2r\sqrt{s}}{\gamma_2\sigma_1{\delta}}\right)}$,\\
$L_2=\max\{L,L_1\}$. 
\noindent When $|1-\rho|<1$, \\$\limsup_{R\rightarrow \infty}\norm{\theta^R-\theta^*}\leq \frac{\alpha(2+\epsilon_3)(4\Gamma_1+\beta)}{\rho}$.
\end{theorem*}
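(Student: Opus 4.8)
The plan is to adapt the contraction argument of \cite{cao2020fltrust}, replacing its single root‑dataset update by the collection of per‑client guiding updates $\wtilde{\Delta}_j^{(i)}$ and absorbing the non‑IID‑ness into the heterogeneity constant $\beta$. Write $\bar F(\theta){=}\tfrac1N\sum_{j=1}^N F_j(\theta)$; by Assumptions~1--2 it is $\mu$-strongly convex and $L$-smooth with minimiser $\theta^*$, so $\grad\bar F(\theta^*){=}0$, and the heterogeneity bound in Assumption~3 gives $\norm{\grad F_j(\theta^*)}\le\beta$, hence $\norm{\grad F_j(\theta)}\le\beta+L\norm{\theta-\theta^*}$ for all $j$. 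Since $E{=}1$, the guiding update is $\wtilde{\Delta}_j^{(i)}{=}\alpha\,\grad_{\theta}l(\theta^{(i-1)};\mathcal{M}_j^{(0)})$ and the global step is $\theta^{(i)}{=}\theta^{(i-1)}-\tfrac1{|\wtilde{\mathcal{N}}^{(i)}|}\sum_{j\in\wtilde{\mathcal{N}}^{(i)}}z_j^{(i)}$. The first step I would carry out is a \emph{uniform} concentration bound for the guiding gradients: on an event $\mathcal{E}$ with $\Prob(\mathcal{E})\ge 1-\tilde\delta$, for every $j\in[N]$ and every $\theta\in\Theta$,
\begin{equation*}
\norm{\grad_{\theta}l(\theta;\mathcal{M}_j^{(0)})-\grad F_j(\theta)}\le\Gamma_1+\Gamma_2\norm{\theta-\theta^*}.
\end{equation*}
This is obtained as in \cite{cao2020fltrust}: split the left side into $\grad l(\theta^*;\mathcal{M}_j^{(0)})-\grad F_j(\theta^*)$ and $h(\mathcal{M}_j^{(0)},\theta)-\Expc h(\mathcal{M}_j^{(0)},\theta)$, project onto a unit vector, apply the sub‑exponential (Bernstein‑type) tails of Assumption~3, take a $6$-net of the sphere $B$ for the first piece (this produces the $d\log 6$ term of $\Gamma_1$ and the $\sqrt{2/s}$ scaling) and a net of $B\times\Theta$ for the second, using Assumption~2 and the radius bound $\Theta\subset\{\norm{\theta-\theta^*}\le r\sqrt d\}$ to control the discretisation error (this is where $L_2{=}\max\{L,L_1\}$ and the $\tfrac12 d\log(s/d)$ term enter $\Gamma_2$). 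A union bound over the three failure modes in Assumptions~2--3 and over the $N$ clients forces $\delta{=}\tilde\delta/N$; since $\mathcal{M}_j^{(0)}$ is fixed and the bound is uniform in $\theta$, this single event covers all $R$ rounds.

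\textbf{Per-client bound.} Conditioning on $\mathcal{E}$, fix a round $i$ and a client $j\in\wtilde{\mathcal{N}}^{(i)}$, i.e.\ one passing Conditions~\eqref{eq:cond1}--\eqref{eq:cond2} with $\epsilon_1{=}0$, $\epsilon_2{=}1/\epsilon_3$. The length test yields $\norm{z_j^{(i)}}\le\epsilon_3\norm{\wtilde{\Delta}_j^{(i)}}=\alpha\epsilon_3\norm{\grad_{\theta}l(\theta^{(i-1)};\mathcal{M}_j^{(0)})}$; combining this with the concentration bound, with $\norm{\grad F_j(\theta^{(i-1)})}\le\beta+L\norm{\theta^{(i-1)}-\theta^*}$, and inserting $\pm\,\alpha\grad F_j(\theta^{(i-1)})$, I get a bound of the form
\begin{align*}
\norm{z_j^{(i)}-\alpha\grad F_j(\theta^{(i-1)})}\le{}& 4\alpha(2+\epsilon_3)\Gamma_1+\alpha(1+\epsilon_3)\beta\\
&+\bigl(8\alpha(2+\epsilon_3)\Gamma_2+\alpha(1+\epsilon_3)L\bigr)\norm{\theta^{(i-1)}-\theta^*},
\end{align*}
the absolute constants being produced by the chained triangle inequalities (including the hops through $\theta^*$ and through $\grad l(\theta^*;\mathcal{M}_j^{(0)})$). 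The key point is that this holds for \emph{every} retained client, whatever the number of faulty clients is and whatever their vectors are; the direction test $C_1>0$ only sharpens it and is not essential. This per‑client reading of the similarity test is the main place the argument departs from the trust‑score aggregation of \cite{cao2020fltrust}.

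\textbf{Recursion and unrolling.} I would then decompose
\begin{align*}
\theta^{(i)}-\theta^*={}&\bigl[(\theta^{(i-1)}-\theta^*)-\alpha\grad\bar F(\theta^{(i-1)})\bigr]\\
&+\Bigl[\alpha\grad\bar F(\theta^{(i-1)})-\tfrac{\alpha}{|\wtilde{\mathcal{N}}^{(i)}|}\sum_{j\in\wtilde{\mathcal{N}}^{(i)}}\grad F_j(\theta^{(i-1)})\Bigr]\\
&+\tfrac1{|\wtilde{\mathcal{N}}^{(i)}|}\sum_{j\in\wtilde{\mathcal{N}}^{(i)}}\bigl[\alpha\grad F_j(\theta^{(i-1)})-z_j^{(i)}\bigr].
\end{align*}
The first bracket is the standard strongly‑convex descent contraction: with $\alpha{=}\mu/(2L^2)$ its norm is at most $\sqrt{1-2\alpha\mu+\alpha^2L^2}\,\norm{\theta^{(i-1)}-\theta^*}=\sqrt{1-3\mu^2/(4L^2)}\,\norm{\theta^{(i-1)}-\theta^*}\le\sqrt{1-\mu^2/(4L^2)}\,\norm{\theta^{(i-1)}-\theta^*}$. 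The second bracket is at most $\alpha\beta$ by the $\beta$-heterogeneity assumption applied termwise — this is the cost of averaging over the data‑dependent subset $\wtilde{\mathcal{N}}^{(i)}$ rather than over $[N]$, and it lifts the $\beta$-coefficient in the bias to $2+\epsilon_3$. The third bracket is at most $\max_{j\in\wtilde{\mathcal{N}}^{(i)}}\norm{z_j^{(i)}-\alpha\grad F_j(\theta^{(i-1)})}$, which is controlled by the per‑client bound. Summing the three gives $\norm{\theta^{(i)}-\theta^*}\le(1-\rho)\norm{\theta^{(i-1)}-\theta^*}+\alpha(2+\epsilon_3)(4\Gamma_1+\beta)$ with $\rho$ exactly as in the statement; iterating over $i$ and using $\sum_{k\ge0}(1-\rho)^k\le 1/\rho$ yields \eqref{eq:thm}, and when $|1-\rho|<1$ the homogeneous term $(1-\rho)^i\norm{\theta^0-\theta^*}$ vanishes as $i\to\infty$, giving the $\limsup$ bound.

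The genuinely hard part is the uniform concentration step: obtaining a gradient‑concentration estimate for the \emph{small} guiding sample $\mathcal{M}_j^{(0)}$ that is simultaneously uniform over all $\theta\in\Theta$ — needed because $\mathcal{M}_j^{(0)}$ is reused across all $R$ rounds while $\theta^{(i-1)}$ drifts — and uniform over all clients — needed because the surviving set $\wtilde{\mathcal{N}}^{(i)}$ is itself data‑dependent. This is exactly what forces the covering‑number arguments over $B$ and $B\times\Theta$, the role of $r\sqrt d$ and of $L_2{=}\max\{L,L_1\}$, and the $1/N$ shrinkage of the confidence level, hence the precise forms of $\Gamma_1$ and $\Gamma_2$. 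The remaining steps are elementary; apart from the per‑client test, the only other new ingredient relative to \cite{cao2020fltrust} is the bookkeeping of the data‑dependent subset average together with the non‑IID term $\beta$.
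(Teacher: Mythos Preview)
Your proposal is correct and uses essentially the same ingredients as the paper's proof: the concentration of the guiding gradients (the paper's Lemma~3, adapted from \cite{cao2020fltrust}), a per-client bound on the retained updates derived from the similarity conditions (the paper's Lemma~1), the standard strong-convexity contraction (the paper's Lemma~2), and a union bound over the $N$ clients to get $\delta=\tilde\delta/N$.

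The organizational differences are minor but worth noting. The paper centres its decomposition on the \emph{per-client} expected gradients $\grad F_j$: Lemma~1 bounds $\norm{z_j^{(i)}-\alpha(\grad F_j(\theta^{(i-1)})-\grad F_j(\theta^*))}$, and Lemma~2 applies the contraction to each $F_j$ separately (then averages), so the heterogeneity term $\beta$ enters only through $\norm{\grad F_j(\theta^*)}\le\beta$ inside Lemma~1. You instead decompose around the \emph{global} gradient $\grad\bar F$, which makes the cost of averaging over the data-dependent set $\wtilde{\mathcal{N}}^{(i)}$ appear as an explicit ``second bracket'' of size $\alpha\beta$; the remaining $\alpha(1+\epsilon_3)\beta$ comes from $\norm{\grad F_j(\theta^{(i-1)})}\le\beta+L\norm{\theta^{(i-1)}-\theta^*}$ inside your per-client bound. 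The arithmetic reassembles to the same $\rho$ and the same bias $\alpha(2+\epsilon_3)(4\Gamma_1+\beta)$. Your remark that Condition~1 (the sign test) is not actually needed for the bound is correct: the paper uses $z_j^{(i)}\cdot\wtilde\Delta_j^{(i)}>0$ to pass from $\norm{z-\wtilde\Delta}$ to $\norm{z+\wtilde\Delta}$ before invoking the triangle inequality, but one can bound $\norm{z-\wtilde\Delta}\le(1+\epsilon_3)\norm{\wtilde\Delta}$ directly from the length test alone. One small inconsistency: your displayed concentration inequality has $\Gamma_1+\Gamma_2\norm{\theta-\theta^*}$, but you later (correctly) use $4\Gamma_1+8\Gamma_2\norm{\theta-\theta^*}$, matching the paper's Lemma~3; just make the constants agree when you write it out.
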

%\vspace{-1.5mm}
\begin{proof}
From Section \ref{sec:diversefl}, we recall that for round $i\in [R]$, $\wtilde{\mathcal{N}}^{(i)}$ is the set of indices of the clients for which the per-client similarity conditions in \ref{eq:cond1} and \ref{eq:cond2} are satisfied. Therefore, for $j\in\wtilde{\mathcal{N}}^{(i)}$, the client's uploaded update $z_j^{(i)}$ and its corresponding guiding update $\wtilde{\Delta}_j^{(i)}$ update satisfy $z_j^{(i)}\cdot\wtilde{\Delta}_j^{(i)}>0$, and $\norm{z_j^{(i)}}/\norm{\wtilde{\Delta}_j^{(i)}}<\epsilon_3$. As $E=1$, $\wtilde{\Delta}_j^{(i)}=\alpha^i\wtilde{g}_j^{i,1}$. For simplicity, we define $\wtilde{g}_j^{i}\overset{\Delta}{=}\wtilde{g}_j^{i,1}$. For proving our Theorem, we first need multiple Lemmas that are described next.
%\vspace{-1.5mm}
\begin{lemma}
\label{lemma:1}
For $i\in[R]$ and $j{\in}\wtilde{\mathcal{N}}^{(i)}$, we have the following:
\begin{align}
\label{eq:lemma1}
    \norm{z_j^{(i)}{-}\alpha^i(\grad F_j(&\theta^{(i-1)}){-}\grad F_j(\theta^*))}\nonumber\\&{\leq} (2{+}\epsilon_3)\alpha^i\norm{\wtilde{g}_j^{i}{-}\grad F_j(\theta^{(i-1)})}\nonumber\\
    &+(1{+}\epsilon_3)\alpha^i\norm{\grad F_j(\theta^{(i-1)}){-}\grad F_j(\theta^*)}\nonumber\\
    &+(2+\epsilon_3)\alpha^i \beta
\end{align}
\end{lemma}
%\vspace{-1.5mm}
\begin{proof}
\begin{align}
&\norm{z_j^{(i)}-\alpha^i(\grad F_j(\theta^{(i-1)})-\grad F_j(\theta^*))}\nonumber\\
&\leq \norm{z_j^{(i)}-\wtilde{\Delta}^i_j}+\norm{\wtilde{\Delta}^i_j-\alpha^i\grad F_j(\theta^{(i-1)})}+\alpha^i\norm{\grad F_j(\theta^*)},\nonumber\\
&\overset{a}{\leq}\norm{z_j^{(i)}+\wtilde{\Delta}^i_j}+\norm{\wtilde{\Delta}^i_j-\alpha^i\grad F_j(\theta^{(i-1)})}+\alpha^i\norm{\grad F_j(\theta^*)},\nonumber\\
&\overset{b}{\leq}(1+\epsilon_3)\norm{\wtilde{\Delta}^i_j}+\norm{\wtilde{\Delta}^i_j-\alpha^i\grad F_j(\theta^{(i-1)})}+\alpha^i\norm{\grad F_j(\theta^*)},\nonumber\\
&\quad+\norm{\wtilde{\Delta}^i_j-\alpha^i\grad F_j(\theta^{(i-1)})}+\alpha^i\norm{\grad F_j(\theta^*)},\nonumber\\
&\overset{(c)}{\leq}(2+\epsilon_3)\alpha^i\norm{\wtilde{g}_j^{i}-\grad F_j(\theta^{(i-1)})}\nonumber\\
&\quad+(1+\epsilon_3)\alpha^i\norm{\grad F_j(\theta^{(i-1)})-\grad F_j(\theta^*)}\nonumber\\
&\quad+(2+\epsilon_3)\alpha^i \beta,
\end{align}
where $(a)$ holds as $z_j^{(i)}{\cdot}\wtilde{\Delta}_j^{(i)}{>}0$, while $(b)$ holds because $\norm{z_j^{(i)}}{/}\norm{\wtilde{\Delta}_j^{(i)}}{<}\epsilon_3$. For $(c)$, note that $\theta^*$ satisfies the following:
\begin{equation}\frac{1}{N}\sum_{j\in[N]}\grad F_j(\theta^*)=0.\end{equation} Therefore, by Assumption $3$, $\norm{\grad F_j(\theta^*)}\leq \beta$.
\end{proof}
%\vspace{-1.5mm}
\begin{lemma}
\label{lemma:2}
Let the learning rate be $\alpha^i=\alpha=\mu/(2L^2)$ for each communication round $i\in[R]$. Then, the following holds:
\begin{align}
\norm{\theta^{(i-1)}{-}\theta^*{-}\alpha^i(\grad F_j(\theta^{(i-1)}){-}\grad F_j(\theta^*))}\nonumber\\
{\leq} \sqrt{1{-}\frac{\mu^2}{4 L^2}}\norm{\theta^{(i-1)}{-}\theta^*}\nonumber
\end{align}
\end{lemma}
\begin{proof} Expanding the left hand side, we have:
\begin{align}
&\norm{\theta^{(i-1)}-\theta^*-\alpha^i(\grad F_j(\theta^{(i-1)})-\grad F_j(\theta^*))}^2\nonumber\\
&\quad=\norm{\theta^{(i-1)}-\theta^*}^2+\alpha^2\norm{\grad F_j(\theta^{(i-1)})-\grad F_j(\theta^*)}^2\nonumber\\
&\quad\quad-2\alpha(\theta^{(i-1)}-\theta^*)\cdot(\grad F_j(\theta^{(i-1)})-\grad F_j(\theta^*)).\label{eq:e}
\end{align}
By Assumption $1$, we have the following:
\begin{align}
    \norm{\grad F_j(\theta^{(i-1)})-\grad F_j(\theta^*)}\leq L\norm{\theta^{(i-1)}-\theta^*},\label{eq:a}
\end{align}
\begin{align}
    F_j(\theta^*){+}\grad F_j(\theta^*){\cdot}(\theta^{(i-1)}{-}\theta^*)&{\leq} F_j(\theta^{(i-1)})\nonumber\\
    &{-}\frac{\mu}{2}\norm{\theta^{(i-1)}{-}\theta^*}^2,\label{eq:b}
\end{align}
\begin{align}
    &F_j(\theta^{(i-1)})+\grad F_j(\theta^{(i-1)})\cdot(\theta^*-\theta^{(i-1)})\leq F_j(\theta^*).\label{eq:c}
\end{align}
Summing up \eqref{eq:b} and \eqref{eq:c} results in the following:
%\vspace{-1.0mm}
\begin{align}
    (\theta^*{-}\theta^{(i-1)})&{\cdot} (\grad F_j(\theta^{(i-1)}){-}\grad F_j(\theta^*))\nonumber\\
    &{\leq} {-}\frac{\mu}{2}\norm{\theta^{(i-1)}{-}\theta^*}^2.\label{eq:d}
\end{align}
%\vspace{-1.0mm}
Substituting \eqref{eq:a} and \eqref{eq:d} in \eqref{eq:e}, we have the following:
\begin{align}
&\norm{\theta^{(i-1)}-\theta^*-\alpha(\grad F_j(\theta^{(i-1)})-\grad F_j(\theta^*))}^2\nonumber\\
&\quad\quad\quad\quad\quad\quad\leq (1+\alpha^2 L^2-\alpha^i\mu)\norm{\theta^{(i-1)}-\theta^*}^2.\label{eq:f}
\end{align}
Taking square root and using  $\alpha{=}\frac{\mu}{2L^2}$ in \eqref{eq:f} completes the proof.
\end{proof}
% Lemma 3
% \vspace{-1.5mm}
The following Lemma is adapted from Lemma 4 in \cite{cao2020fltrust} and reproduced here for completeness.
% \vspace{-1.5mm}
\begin{lemma}
\label{lemma:3}
For any $\delta{\in}(0,1)$, we define the following:
\begin{align}
\Gamma_1&=\sqrt{2}\sigma_1\sqrt{(d\,{\log 6}+\log({3{/}\delta})){/}s},\nonumber\\
\Gamma_2&=\sigma_2\sqrt{\frac{2}{s}}\sqrt{ d\log(\frac{18L_2}{\sigma_2}){+}\frac{1}{2}d\log(\frac{s}{d}){+}\log\left(\frac{6\sigma_2^2r\sqrt{s}}{\gamma_2\sigma_1{\delta}}\right)},\nonumber\\ 
L_2&=\max\{L,L_1\}. \nonumber
\end{align}
When $\Gamma_1\leq \sigma_1^2/\gamma_1$, and $\Gamma_2\leq \sigma_2^2/\gamma_2$, we have the following for each client $j\in\wtilde{\mathcal{N}}^{(i)}$ in communication round $i\in[R]$:
\begin{align}
    \Prob&\left(\norm{\wtilde{g}_j^i-\grad F_j(\theta^{(i-1)})}\leq8\Gamma_2\norm{\theta^{(i-1)}-\theta^*}+4\Gamma_1\right)\nonumber\\
    &\geq1-\delta.
\end{align}
\end{lemma}
Next, we leverage the above results to prove our Theorem. For communication round $i\in[R]$, we have the following:
\begin{align}
&\norm{\theta^i-\theta^*}
=\Biggl|\!\!\!\Biggl|\,\theta^{(i-1)} - \frac{\alpha}{\ssn}\sum_{j\in\ssn}(\grad F_j(\theta^{(i-1)})-\grad F_j(\theta^*))\nonumber\\
    &+\frac{\alpha}{|\ssn|}\sum_{j\in\ssn}(\grad F_j(\theta^{(i-1)})-\grad F_j(\theta^*))-  \frac{1}{|\ssn|}\sum_{j\in\ssn}z_j^i-\theta^*\,\Biggr|\!\!\!\Biggr|\,\,,\nonumber\\
&\overset{(a)}{\leq}\underbrace{\frac{1}{|\ssn|}\sum_{j\in\ssn}\norm{\theta^{(i-1)}-\alpha(\grad F_j(\theta^{(i-1)})-\grad F_j(\theta^*))-\theta^*}}_{e_1}\nonumber\\
    &+\underbrace{\frac{(2+\epsilon_3)\alpha}{|\ssn|}\sum_{j\in\ssn}\norm{\tilde{g}_j^i-\grad F_j(\theta^{(i-1)})}}_{e_2}\nonumber\\
    &+\underbrace{\frac{(1+\epsilon_3)\alpha}{|\ssn|}\sum_{j\in\ssn}\norm{\grad F_j(\theta^{(i-1)})-\grad F_j(\theta^*)}}_{e_3}\nonumber\\
    &+\underbrace{{(2+\epsilon_3)\alpha}\grad F_j(\theta^*)}_{e_4},\nonumber
\end{align}
where $(a)$ follows from Lemma \ref{lemma:1}. For bounding the above, we note that $e_1$, $e_3$ and $e_4$ are bounded based on Lemma \ref{lemma:2}, Assumption $1$ and Assumption $3$ respectively. Furthermore, by Lemma \ref{lemma:3}, each term in the summation in $e_2$ is bounded by $(8\Gamma_2\norm{\theta^{(i-1)}-\theta^*}+4\Gamma_1)$ with probability at least $(1-\delta)$. Hence, by Fréchet lower bound for the probability of intersection of events, $e2$ can be bounded by $(8\Gamma_2\norm{\theta^{(i-1)}-\theta^*}+4\Gamma_1)$ with probability at least $1-|\ssn|\delta \geq 1-N\delta$. Hence, by defining $\wtilde{\delta}\triangleq N\delta$, we have the following with probability at least $1-\wtilde{\delta}$:
\begin{align}
    &\norm{\theta^i-\theta^*}\nonumber\\
    &{\leq}\left(\sqrt{1-\frac{\mu^2}{4L^2}}+8\alpha(2+\epsilon_3)\Gamma_2+\alpha(1+\epsilon_3)L\right)\norm{\theta^{(i-1)}-\theta^*}\nonumber\\
    &\quad\quad\quad\quad+4\alpha(2+\epsilon_3)\Gamma_1+\alpha(2+\epsilon_3)\beta.\label{eq:i}
    \end{align}
Recursively applying \eqref{eq:i}, we arrive at \eqref{eq:thm} in the Theorem.
\end{proof}

%\newpage
\iffalse
\input{12-Appendix-DataCleaning}
\fi

% \ifCLASSOPTIONcaptionsoff
%   \newpage
% \fi
% \appendices
% \input{7-appendices}

\end{document}